\newtheorem{proposition}{Proposition}
\theoremstyle{definition}
\newtheorem{definition}{Definition}[section]
\theoremstyle{lemma}
\newtheorem{lemma}{Lemma}[section]
\begin{document}

\title{Topological Properties of Multi-Party Blockchain Transactions}

\author{Dongfang Zhao\\ 
Department of Computer Science and Engineering\\
The University of Nevada\\
Reno, NV 89557, United States\\
dzhao@unr.edu}

\maketitle

\begin{abstract}
The cross-blockchain transaction remains one of the most challenging problems in blockchains.
The root cause of the challenge lies in the nondeterministic nature of blockchains:
A $n$-party transaction across multiple blockchains might be partially rolled back due to the potential forks in any of the participating blockchains---eventually, only one fork will survive in the competition among miners.
While some effort has recently been made to developing hierarchically distributed commit protocols to make multi-party transactions progress,
there is no systematic method to reason about the transaction outcome.
This paper tackles this problem from a perspective of point-set topology.
We construct multiple topological spaces for the transactions and blockchain forks,
and show that these spaces are internally related through either homeomorphism or continuous functions.
Combined together, these tools allow us to reason about the cross-blockchain transactions through the growing-fork topology, an intuitive representation of blockchains.
% As of writing, this paper is the first study on point-set topological properties of blockchains.
\end{abstract}

\vspace{5mm}
% \noindent
\textbf{Index Terms}---Blockchains, multi-party transactions, point-set topology, homeomorphism.

\section{Introduction}  

% \subsection{Motivation}
Cross-blockchain transaction remains one of the most challenging problems in blockchains~\cite{dzhao_cidr20}.
The root cause of the challenge lies in the nondeterministic nature of blockchains:
A transaction across multiple blockchains might be partially rolled back due to the potential forks in any of the participating blockchains---eventually, only one fork will survive in the competition among miners.
While some effort has recently been made to developing hierarchically distributed commit protocols~\cite{dzhao_arxiv2002_cbt,vzakhary_arxiv19,mherlihy_podc18} to ensure the liveness of multi-party transactions,
they lack a theoretical foundation to guarantee the \textit{completeness} of the transactions, either finally committed or aborted.
That is, the state-of-the-art solely depends on concrete blockchain implementations (e.g., ~\cite{ethereum,bitcoin,algorand_sosp17,hyperledger_eurosys18,hotstuff_podc19})) without formal methods and abstractions.

% \subsection{Contributions}
This paper tackles the $n$-blockchain-transaction problem from the perspective of point-set topology.
We construct a topological space for the transactions among blockchains
and show that this space is homeomorphic to another topological space of static fork graphs induced by those transactions.
We then construct a time-varying counterpart of the static fork graphs, namely the growing-fork topological space,
and show that a continuous function exists from the growing-fork topology to the static forks.
Combined together, these tools allow us to reason about the cross-blockchain transactions through the growing-fork topology, an intuitive representation of blockchains.
For instance, the \textit{decidiability} of whether an arbitrary $n$-party cross-blockchain transaction is able to complete in finite time can be reduced to a pure topological problem where we can call upon a vast literature of results in combinatorial and algebraic topology.
% To the best of knowledge, this work is the first study on point-set topological properties of blockchains.

The remainder of this paper is organized as follows.
We review elementary concepts of blockchains and topology in~\S\ref{sec:bg}.
The system models and assumptions of the proposed topological approach to cross-blockchain transactions are provided in~\S\ref{sec:model}.
We detail the construction of topological spaces for transactions and blockchain forks in~\S\ref{sec:space}.
The relationships among the topological spaces are then illustrated through multiple continuous functions in~\S\ref{sec:map}.
We finally conclude this paper in~\S\ref{sec:final}.

\section{Background and Related Work}
\label{sec:bg}

\subsection{Blockchains}

\textbf{Data Structures.}
A blockchain is a linkedlist replicated on multiple machines.
Machines are usually also called nodes, miners, or participants.
The element of the linkedlist is a block of multiple transactions,
each of which involves two encrypted addresses and a specific number (e.g., the Bitcoin amount to transfer).
The block's overall data, plus a puzzle number, called a \emph{nonce}, are hashed into a value that is passed to the next block,
which will do the same and pass along the hashed value.
Therefore, a linked list of ``locked'' blocks is formed through the compounded hash values.

\textbf{Mining.}
The procedure to solve the puzzle, i.e., find the nonce, is called mining;
For instance, Bitcoin requires that the hashed value of the nonce and current transactions is below some threshold.
Because of this inequality requirement, the qualified nonce value is not unique;
It is possible that two or more nodes solve the puzzle at the same time,
and this is allowed in blockchains.
The consequence is then there might exist more than one linkedlists,
called forks.
Depending on implementations, the number of forks is expected to reduce back to one.
For instance, Bitcoin achieves this by eliminating those forks that are shorter than others.
Initially, a blockchain has only one linkedlist, initiated by the very first block, called the genesis block.

\textbf{Pools.}
When a new block is appended to the linkedlist, a system reward plus some transaction fee will be transferred to the (encrypted address) of the miner.
To improve the chance to win the race of mining, a miner can choose to join a pool of other miners to share the reward with the members in the pool.
Since the success of making a profit is all about competition, it is not uncommon for a pool to launch an attack again other pools.

\subsection{Modeling Blockchains}

\textbf{Game theory.}
There are some work leveraging game theory to study the behaviors of blockchains, especially the pooling formation in cryptocurrency.
In~\cite{ieyal_sp15}, Eyal studied how one pool decides to attack another pool by sending spy nodes through a nonoperative game.
In a similar spirit, Tsabary and Eyal~\cite{itsabary_ccs18} showed that a game-theoretical model could be built for the scenarios when transaction fees play a more important role in miner's behaviors.

\textbf{Group theory.}
Speaking of blockchain pools, another branch of study was to take a group-theoretical approach~\cite{dzhao_arxiv2002_gt}.
Essentially, the ``movement'' of miners is modeled as a permutation of the set associated with the blockchain.
Such movement was then shown to be group operation:
closed, associative, and inversable.
Therefore, a lot of group properties are immediately available to the pool group of blockchains.

\subsection{Cross-blockchain Operations}

The first study on operations among an arbitrary number of blockchains was published in~\cite{mherlihy_podc18}.
Herlihy showed that a simple timeout scheme could enable the atomicity in a multi-party operation.
However, the operation is not necessarily a transaction:
partial changes are still possibly committed.
Admittedly, it is arguable that not all applications require strong transactional properties.
The latest findings in this direction can be found in the followup work~\cite{mherlihy_vldb19}.

Another branch of study over cross-blockchain operations indeed focuses on transactions.
A recent work called AC3~\cite{vzakhary_arxiv19} employs an extra component (known as \textit{witness blockchain}) to govern the cross-chain operations.
Although the witness blockchain is comprised of the nodes from existing blockchains, still, these \textit{virtual} nodes on the witness blockchain become the critical components of the entire ecosystem.
To overcome this issue, a completely distributed commit protocol was proposed in~\cite{dzhao_arxiv2002_cbt}.

\subsection{Topology}

We conclude this section by reviewing some basic topology concepts and theorems.
A topology of a set $S$ is a collection of subsets of $S$, denoted $\mathcal{T}$.
One example topology of $S$ is then the power set of $S$, $\mathcal{P}(S)$, 
which consists of all the possible $2^{|S|}$ subsets of $S$.
This is also called the \textit{discrete topology} of $S$.
The tuple $(S, \mathcal{T})$ is called the \textit{topological space} of $S$.
If the context is clear, we often refer to $S$ to indicate space $(S, \mathcal{T})$.
Each of the subsets $U$ from $\mathcal{T}$ is called an \textit{open set}, and the complement set $S \setminus U$ is a \textit{closed set} by definition.
A function $g$ from space $X$ to $Y$ is called \textit{continuous} if $\forall v$ is an open set in $Y$, then $g^{-1}(v)$ is an open set in $X$.
The composition of two continuous functions is also continuous.
If both $g$ and $g^{-1}$ are continuous, we call $g$ a \textit{homeomorphism}.
Because a homeomorphism is defined purely on open and closed sets,
two topological spaces are considered equivalent if such homeomorphism exists.
Usually, we expect to migrate a complex problem in one topological space to another such that the problem can be solved more efficiently or more intuitively.
The aforementioned concepts and techniques are also referred to as \textit{point-set topology}.

Point-set topology is the main technique used in this paper;
leveraging point-set topology in distributed computing contexts dated back in 1980's~\cite{balpern_ipl85} and recently revived in~\cite{tnowak_podc19}.
We will show how to reason the cross-blockchain transaction problem with a series of continuous functions constructed from the more intuitive topological properties on the blockchain forks.
To the best of our knowledge, this paper is the first work taking a point-set topological approach to study multi-party transactions among blockchains.

In addition to point-set topology, there are \textit{algebraic-topological} methods to study those problems that are better modeled in a geometric sense.
These methods, usually categorized into \textit{homotopy} groups and \textit{homology} groups,
study the ``smaller'' pieces of the targeting objects and try to ``map'' the geometrical objects into algebraic objects, such as \textit{groups}.
The smaller pieces are \textit{loop curves} and \textit{triangle patches} (\textit{simplices}, more formally) for homotopy groups and homology groups, respectively.
Then, the equivalence between two topological spaces can be investigated by checking the algebraic groups, usually through \textit{homomorphic groups}.
A good review of algebraic-topological techniques applied to distributed computing can be found in~\cite{mherlihy_book13};
some of the hardest problems were shown to be elegantly solvable through algebraic topology~\cite{mherlihy_dc13,mherlihy_podc10,rguer_tcs09,mherlihy_jacm99}.
Remarkably, a unique subbranch of topology, namely \textit{combinatorial topology},
specifically studied the topological properties of distributed computing models~\cite{mherlihy_stoc93,mherlihy_stoc94}.
This paper does not touch any of these algebraic- or combinatorial-topological methods,
although we might explore them in the near future.

\section{Models and Assumptions}
\label{sec:model}

We assume each blockchain has a nontrivial number of nodes.
That is, each blockchain is a \textit{cluster} of nodes.
We will use blockchain and cluster interchangeably.
The set of blockchains is denoted $C$,
where each blockchain is $C_i$, $0 \leq i < n$, $i \in \mathbb{Z}$.

We assume each cluster can spawn an arbitrary number of forks.
Although two forks are most commonly seen in cryptocurrency,
it is not uncommon the have more forks if the underlying consensus is customized for domain-specific usage, e.g., scientific data provenance~\cite{aalmamun_bigdata18}.
A \textit{blockchain fork} is defined as follows in this paper.

\begin{definition} [Blockchain Fork]
Let $F$ denote the set of all forks,
each of which, denoted $F_i$, resides on a cluster $C_i$.
Initially, all $F$'s have a single fork, 
initiated by the genesis block.
When $f$ ($f \in \mathbb{Z}, f > 1$) nodes succeed in a specific round,
the blockchain will spawn $f-1$ new forks ($0 \leq i \leq f-1$).
We use $F_{-i} = F \setminus \{F_{i}\}$ to denote the complement set of fork $F_{i}$ in $F$.
Each fork will have one of the following three states.
$F_i$ is called \textit{eliminated} if any of other forks $F_j \in F_{-i}$ suppress $F_i$.
$F_i$ is \textit{confirmed} if all other forks $F_{-i}$ are eliminated.
Any other forks are called \textit{undecided}.
\end{definition}

With this definition, each fork can be categorized into one of the three possible states. 
Therefore, we can construct a \textit{fork graph} among the elements of $F$ with three types of vertices/nodes.
Note that this graph is static; we speak of nothing about the timestamp or step number in this definition.
Now we are ready to define the transactions among these forks.

\begin{definition} [Transaction Proxy]
A transaction proxy on a blockchain, or, equivalently, a fork graph, is a node of that fork graph where the transaction is carried out.
A proxy is called \textit{live} if its fork is pending or confirmed.
\end{definition}

Therefore, a transaction is a collection of states on the proxies from the fork graphs.
The transaction is time-oblivious since users are only interested in the final result of the transaction: \textit{commit}, \textit{pending}, or \textit{abort}.

Nevertheless, a blockchain is indeed a dynamic data structure that changes over time.
To this end, we introduce the extended concept of the fork graph, the so-called \textit{growing-fork graph}.

\begin{definition} [Growing-fork Graph]
A growing-fork graph, $F^\omega_i$, is a sequence of fork graphs associated to the cluster $C_i$.
Let $F_i^t$ denote the fork graph of cluster $C_i$ at time $t$,
then 
\[ \displaystyle
F^\omega_i = \left( F_i^0, F_i^1, \dots \right).
\]
\end{definition}

With the above terms defined, we are ready to construct the topological spaces associated to the transaction proxies, the static fork, and the real-time blockchain forks.

\section{Topological Spaces}
\label{sec:space}

\subsection{Topological Space of Cross-Blockchain Transactions}

\begin{definition}
Let $C$ denote the set of clusters each of which represents a blockchain, $C = \{C_0, \dots, C_{n-1}\}$.
Define the \textit{ternary distance} between any pair of elements in $C$, $C_i$ and $C_j$, as a map $d_t: C \times C \rightarrow \{1,\frac{1}{2},2\}$ as follows:
\[
  d_t(C_i, C_j) =
    \begin{cases}
      2, & \text{both $C_i$ and $C_j$ commit}; \\
      \frac{1}{2}, & \text{both $C_i$ and $C_j$ abort}; \\
      1, & \text{otherwise}.
    \end{cases}  
\]
\end{definition}

We illustrate the three scenarios in Figure~\ref{fig:topo_t}.
The two proxies in the green transaction can both commit since there is no fork at the moment.
The two proxies in the orange transaction can safely abort because both forks from $C_0$ and $C_1$ are to be eliminated.
We cannot decide the result of the yellow transaction because there are forks involved.
A transaction might involve more than two clusters, say $n > 2$, $n \in \mathbb{Z}$.
The states of a $n$-party transaction, i.e., a $n$-cluster transaction, is similarly defined as two parties.

\begin{figure}[!t]
    \centering
    \includegraphics[width=80mm]{./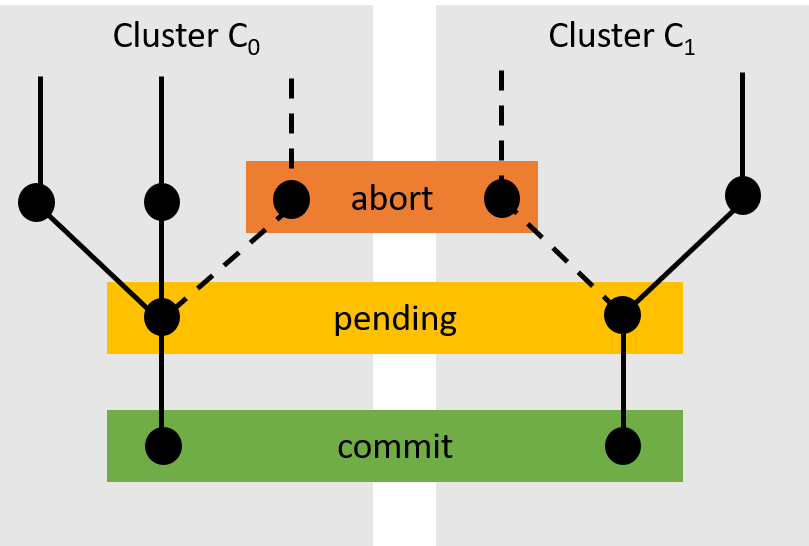}
    \caption{Example of three possible results/distances for a two-cluster transaction.
    }
    \label{fig:topo_t}
\end{figure}

\begin{lemma}
$d_t$ is a metric on $C$.
\end{lemma}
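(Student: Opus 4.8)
The plan is to verify the four defining axioms of a metric for $d_t$ on $C$, adopting the natural convention — implicit in the definition's phrase ``any pair'' and needed for the statement to hold at all — that $d_t(C_i, C_i) = 0$. Three of the four axioms are essentially immediate. \emph{Non-negativity} holds because the codomain $\{1, \tfrac12, 2\}$, together with the diagonal value $0$, lies in $\mathbb{R}_{\ge 0}$. \emph{Symmetry}, $d_t(C_i, C_j) = d_t(C_j, C_i)$, can be read off the definition directly: each of the three defining conditions (``both commit'', ``both abort'', ``otherwise'') is symmetric in $C_i$ and $C_j$. \emph{Identity of indiscernibles} is precisely the diagonal convention: $d_t(C_i, C_j) = 0$ iff $i = j$.

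First I would pin down the structural fact that makes a finite case analysis legitimate: by the Blockchain Fork and Transaction Proxy definitions, within a fixed (time-oblivious) transaction each cluster $C_i$ is in exactly one of the three states \emph{commit}, \emph{abort}, \emph{undecided}, and this state is fixed. Hence $d_t$ depends only on the ordered pair of states, and the triangle inequality $d_t(C_i, C_k) \le d_t(C_i, C_j) + d_t(C_j, C_k)$ reduces to checking finitely many state triples. The key simplification is that every \emph{nonzero} value of $d_t$ lies in $[\tfrac12, 2]$, so a sum of two nonzero distances is at least $1$; therefore the triangle inequality is automatic whenever $d_t(C_i, C_k) \le 1$, as well as in the degenerate cases where two of $i, j, k$ coincide (one summand vanishes and the inequality is trivial). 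The only remaining case is $d_t(C_i, C_k) = 2$, which forces both $C_i$ and $C_k$ to be in state \emph{commit}; since a distance of $\tfrac12$ requires \emph{both} endpoints to abort and neither $C_i$ nor $C_k$ aborts, we get $d_t(C_i, C_j) \ge 1$ and $d_t(C_j, C_k) \ge 1$ regardless of the state of $C_j$, so the right-hand side is $\ge 2$. This closes the triangle inequality.

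The triangle inequality is thus a short case check; the step I expect to be the genuine obstacle is not an inequality at all but the \emph{identity of indiscernibles}. Under the literal reading, a committing cluster satisfies $d_t(C_i, C_i) = 2 \neq 0$, so $d_t$ is a metric only after one restricts to distinct clusters and extends by $0$ on the diagonal (equivalently, declares each cluster to be at distance $0$ from itself); I would state this convention explicitly at the outset. A secondary remark worth a sentence is that $d_t$ is \emph{not} an ultrametric — with $C_i, C_k$ committing and $C_j$ undecided one has $d_t(C_i, C_k) = 2 > 1 = \max\bigl(d_t(C_i, C_j), d_t(C_j, C_k)\bigr)$ — so the proof genuinely needs the additive triangle bound derived above rather than the stronger ultrametric inequality.
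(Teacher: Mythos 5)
Your proof is correct, and it takes a genuinely different and in two respects more careful route than the paper's. The paper verifies the triangle inequality by an exhaustive case split on the value of $d_t(C_i,C_j) \in \{\tfrac12, 1, 2\}$ and then on whether the intermediate cluster $C_k$ commits or aborts; notably it never considers the third possible state (pending/undecided) for $C_k$, so its enumeration is formally incomplete, whereas your structural argument --- every nonzero distance lies in $[\tfrac12,2]$, so any sum of two off-diagonal distances is at least $1$, leaving only the case $d_t(C_i,C_k)=2$, which is closed by observing that a committing endpoint rules out the value $\tfrac12$ for either summand --- covers all intermediate states at once and with less bookkeeping. The second divergence is the identity of indiscernibles: the paper asserts only non-negativity and symmetry ``by definition'' and is silent on $d_t(x,x)$, even though the literal definition gives $d_t(C_i,C_i)=2$ for a committing cluster, so $d_t$ as written is not a metric without your explicit diagonal convention $d_t(C_i,C_i)=0$. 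Flagging and repairing that is a real contribution rather than pedantry, since the subsequent proposition (the $\epsilon=\tfrac14$ ball argument inducing the discrete topology) implicitly relies on each singleton $\{C_i\}$ being the $\epsilon$-ball around $C_i$, which requires exactly this convention. Your closing remark that $d_t$ fails the ultrametric inequality is a harmless aside; the paper does not discuss it either way.
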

\begin{proof}
Obviously, $d_t(C_i, C_j) = d_t(C_j, C_i) \geq 0$ by definition.
It remains to show $d_t(C_i, C_j) \leq d_t(C_i, C_k) + d_t(C_k, C_j)$,
for arbitrary $k \in \mathbb{Z}$, $0 \leq k < n$.

If $d_t(C_i, C_j) = \frac{1}{2}$, it means both $C_i$ and $C_j$ abort.
If $C_k$ commits, then
\[
d_t(C_i, C_k) + d_t(C_k, C_j) = 1 + 1 > \frac{1}{2} = d_t(C_i, C_j).
\]
If $C_k$ aborts, then
\[
d_t(C_i, C_k) + d_t(C_k, C_j) = \frac{1}{2} + \frac{1}{2} > \frac{1}{2} = d_t(C_i, C_j).
\]

If $d_t(C_i, C_j) = 1$, then we know one blockchain commits the transaction and the other aborts. 
Without loss of generality, we assume $C_i$ commits and $C_j$ aborts in the following.
If $C_k$ commits, then 
\[
d_t(C_i, C_k) + d_t(C_k, C_j) = 2 + 1 > 1 = d_t(C_i, C_j).
\]
If $C_k$ aborts, then
\[
d_t(C_i, C_k) + d_t(C_k, C_j) = 1 + \frac{1}{2} > 1 = d_t(C_i, C_j). 
\]
Therefore, $d_t(C_i, C_j) \leq d_t(C_i, C_k) + d_t(C_k, C_j)$.

If $d_t(C_i, C_j) = 2$, meaning that both $C_i$ and $C_j$ commit, 
it is also easy to check the triangle inequality.
If $C_k$ commits, then
\[
d_t(C_i, C_k) + d_t(C_k, C_j) = 2 + 2 > 2 = d_t(C_i, C_j).
\]
If $C_k$ aborts, then
\[
d_t(C_i, C_k) + d_t(C_k, C_j) = 1 + 1 = 2 = d_t(C_i, C_j).
\]
Therefore, again, $d_t(C_i, C_j) \leq d_t(C_i, C_k) + d_t(C_k, C_j)$.
\end{proof}

\begin{proposition} \label{thm:txn_space}
Topological space $\mathcal{T}_{d_t}^C$ on $C$ is induced by $d_t$.
\end{proposition}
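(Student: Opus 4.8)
The plan is to appeal to the standard fact that every metric on a set induces a topology on that set, the so-called \emph{metric topology}. By the preceding lemma, $d_t$ is a metric on $C$ (with the convention $d_t(C_i,C_i)=0$), so the construction applies verbatim. Concretely, for $C_i \in C$ and $\epsilon > 0$ write $B_{d_t}(C_i,\epsilon) = \{\, C_j \in C : d_t(C_i,C_j) < \epsilon \,\}$ for the open ball of radius $\epsilon$ centered at $C_i$, and define
\[
\mathcal{T}_{d_t}^C = \bigl\{\, U \subseteq C : \forall\, C_i \in U\ \exists\, \epsilon > 0 \text{ with } B_{d_t}(C_i,\epsilon) \subseteq U \,\bigr\},
\]
equivalently, the collection of all unions of open balls. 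The claim is then that this $\mathcal{T}_{d_t}^C$ is a topology on $C$.

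First I would verify the three topology axioms. That $\emptyset, C \in \mathcal{T}_{d_t}^C$ is immediate, and closure under arbitrary unions follows at once from the pointwise form of the definition. For closure under finite intersections, let $U_1, U_2 \in \mathcal{T}_{d_t}^C$ and $C_i \in U_1 \cap U_2$; choose $\epsilon_1, \epsilon_2 > 0$ with $B_{d_t}(C_i,\epsilon_k) \subseteq U_k$, and set $\epsilon = \min\{\epsilon_1,\epsilon_2\}$. Since open balls are monotone in the radius, $B_{d_t}(C_i,\epsilon) \subseteq B_{d_t}(C_i,\epsilon_1) \cap B_{d_t}(C_i,\epsilon_2) \subseteq U_1 \cap U_2$, so $U_1 \cap U_2 \in \mathcal{T}_{d_t}^C$. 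The metric axioms (symmetry, triangle inequality) are what guarantee that the open balls form a basis in the first place, which is the only nontrivial input, and it is supplied by the lemma.

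There is no real obstacle here — the proposition is a direct instantiation of the metric-topology construction applied to the metric established in the lemma. The one structural remark worth recording is that the codomain of $d_t$ is the finite set $\{\tfrac{1}{2}, 1, 2\}$, so the only value strictly below $\tfrac{1}{2}$ is $0$, attained solely at $C_j = C_i$; hence $B_{d_t}(C_i, \tfrac{1}{2}) = \{C_i\}$ for every $C_i$, every singleton is open, and therefore $\mathcal{T}_{d_t}^C$ in fact coincides with the discrete topology $\mathcal{P}(C)$ from \S\ref{sec:bg}. I would flag this observation explicitly, since it makes the later homeomorphism arguments in \S\ref{sec:map} essentially set-theoretic.
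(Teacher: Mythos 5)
Your proof is correct and follows essentially the same route as the paper's: both arguments hinge on the observation that the smallest possible distance between distinct clusters is $\tfrac{1}{2}$, so sufficiently small open balls are singletons and $d_t$ induces the discrete topology with basis $\{\{C_0\},\dots,\{C_{n-1}\}\}$ (the paper takes $\epsilon=\tfrac14$ where you take $\epsilon=\tfrac12$ with strict inequality). Your explicit verification of the topology axioms and your flagging of the convention $d_t(C_i,C_i)=0$ are welcome additions, since the paper's definition of $d_t$ never actually assigns distance $0$ to a cluster and itself, but they do not change the substance of the argument.
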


\begin{proof}
Let $\epsilon = \frac{1}{4}$, then the open ball $B_{d_t}(u, \epsilon)$ around the origin $u$, an open set in the topology.
Because all the possible distances between any clusters are at least $\frac{1}{2}$, 
this $\epsilon$-ball $B_{d_t}(u, \epsilon)$ splits the cluster $C$ into a collection of single-element sets,
each of which is a set of a single cluster $C_i$, $0 \leq i < n$.
That is, 
\[
\mathcal{B} = \{ \{C_0\}, \dots, \{C_{n-1}\}\}.
\]
Evidently, $\mathcal{B}$ is basis of $\mathcal{T}_{d_t}^C$:
every element in $C$ exactly belongs to an element in $\mathbb{B}$,
and an open set in the topology is simply an arbitrary union of $b \in \mathcal{B}$.
\end{proof}

Having constructed the topology of transactions among blockchains, 
we will start building a topology from the perspective of blockchain forks.

\subsection{Topological Space of Fork Graphs}

\begin{definition}[Fork Distance]
When two transaction proxies are both live on two fork graphs $F_i$ and $F_j$, respectively,
the fork distance $d_f$ is defined as
\[
d_f (F_i, F_j) = \frac{1}{|F_i|} + \frac{1}{|F_j|}.
\]
If any of the two proxies is not live,
we define $d_f = \delta_f = \displaystyle \frac{1}{\mathtt{sup}\{|F_i|: F_i \in F\}}$.
\end{definition}

As an example, we calculate some fork distances of transactions on Figure~\ref{fig:topo_t} as follows ($F_0$ and $F_1$ representing the fork graphs of $C_0$ and $C_1$, respectively):
\begin{itemize} \setlength\itemsep{1em}
    \item 
    $\displaystyle
    d_f(F^{commit}_0, F^{commit}_1) = \frac{1}{|F^{commit}_0|} + \frac{1}{|F^{commit}_1|} = \frac{1}{2} + \frac{1}{2} = 1; 
    $
    \item 
    $\displaystyle
    d_f(F^{abort}_0, F^{abort}_1) = \frac{1}{\mathtt{sup}\{|F^{abort}_0|, |F^{abort}_1|\}} = \frac{1}{\mathtt{sup}\{3, 2\}} = \frac{1}{3}.
    $
\end{itemize}

Now, we will show that $d_f$ is a metric on the set of fork graphs.

\begin{lemma}
Fork distance $d_f$ is a metric on $F$.
\end{lemma}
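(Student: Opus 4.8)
The plan is to follow the template of the proof that $d_t$ is a metric: verify non-negativity and symmetry directly, then obtain the triangle inequality by a case analysis driven by the liveness of the proxies.

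\textbf{Non-negativity and symmetry} are immediate. Every value of $d_f$ is either $\frac{1}{|F_i|}+\frac{1}{|F_j|}$, a sum of reciprocals of positive integers, or $\delta_f = \frac{1}{\sup\{|F_i| : F_i \in F\}}$, which is positive provided the supremum is finite (as the worked fork-distance examples presuppose); hence $d_f \ge \delta_f > 0$. Symmetry holds because both the formula $\frac{1}{|F_i|}+\frac{1}{|F_j|}$ and the predicate ``some proxy of the pair is not live'' are symmetric in $i$ and $j$; as in the $d_t$ lemma I would not treat the coincidence axiom separately.

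\textbf{Triangle inequality.} Fix $F_i$, $F_j$ and an arbitrary third fork graph $F_k$ carrying its own proxy. First I would record the uniform bound $\delta_f \le \frac{1}{|F_\ell|} \le 1$ --- valid since $1 \le |F_\ell| \le \sup\{|F_i| : F_i \in F\}$ --- so that every pairwise distance lies in $[\delta_f, 2]$ and any pair containing a non-live proxy realizes the minimum value $\delta_f$. Then I would split into three cases. (i) If all three proxies are live, $d_f(F_i,F_j) = \frac{1}{|F_i|}+\frac{1}{|F_j|} \le \frac{1}{|F_i|}+\frac{2}{|F_k|}+\frac{1}{|F_j|} = d_f(F_i,F_k)+d_f(F_k,F_j)$, which is the same reduction used in the ``$C_k$ commits'' branches of the $d_t$ proof. (ii) If the proxy on $F_i$ or on $F_j$ is not live, then $d_f(F_i,F_j) = \delta_f$, while the right-hand side is a sum of two pairwise distances each at least $\delta_f$, hence at least $2\delta_f > \delta_f$. (iii) If only the proxy on $F_k$ is not live, then $d_f(F_i,F_k) = d_f(F_k,F_j) = \delta_f$, so the inequality reduces to showing $\frac{1}{|F_i|}+\frac{1}{|F_j|} \le 2\delta_f$; this is the case that genuinely requires work.

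\textbf{The main obstacle} is case (iii): the ``detour'' through a fork graph whose proxy is dead costs only $2\delta_f$, the smallest amount any path can cost, so the entire burden falls on bounding the direct distance $\frac{1}{|F_i|}+\frac{1}{|F_j|}$ by it. Since the live endpoints may well have small fork counts while the dead-proxy cluster does not, this does not follow from the definition alone; I expect the fix to require an additional structural hypothesis --- e.g.\ that an intermediate proxy can become non-live only when the endpoints' fork counts are already maximal --- or a restriction to the triples of fork graphs that actually co-occur in one $n$-party transaction, which is the only regime in which the lemma is invoked. Cases (i) and (ii) are routine and parallel the $d_t$ proof almost verbatim.
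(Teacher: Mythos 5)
Your cases (i) and (ii) reproduce what the paper actually proves: its argument covers (a) the case where one of the two \emph{endpoints} carries a non-live proxy, via $\delta_f \le \delta_f + d_f(F_k,F_j)$, and (b) the case where both endpoints are live, where it expands $d_f(F_i,F_k)+d_f(F_k,F_j) = \bigl(\frac{1}{|F_i|}+\frac{1}{|F_k|}\bigr)+\bigl(\frac{1}{|F_k|}+\frac{1}{|F_j|}\bigr)$. That expansion is only valid when the proxy on $F_k$ is itself live, so the paper silently skips your case (iii). Your diagnosis of that case is correct, and the obstacle is not an artifact of your proof strategy: with the definition exactly as stated, the triangle inequality can fail. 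Take $|F_i|=|F_j|=2$ and $|F_k|=3$ with only the proxy on $F_k$ not live; then $\delta_f=\frac13$, $d_f(F_i,F_j)=\frac12+\frac12=1$, but $d_f(F_i,F_k)+d_f(F_k,F_j)=\frac13+\frac13=\frac23<1$. So the lemma as written needs either the extra structural hypothesis you propose (restricting which triples can occur, or tying non-liveness to maximal fork counts) or a repaired fallback value --- e.g.\ setting the non-live distance to an upper bound such as $2$ rather than the lower bound $\delta_f$ would make every detour at least as expensive as the direct distance and close all three cases. In short, your proposal is at least as complete as the paper's own proof; the gap you flag in case (iii) is a genuine gap in the paper, not a missing idea on your part.
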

\begin{proof}
If any proxy is not live, say that on $F_i$, then
(i) $d_f (F_i, F_j) \triangleq \delta_f \geq 0$,
(ii) $d_f (F_i, F_j) = d_f (F_j, F_i) = \delta_f$, and
(iii) $d_f (F_i, F_j) = \delta_f \leq \delta_f + d_f (F_k, F_j) = d_f (F_i, F_k) + d_f (F_k, F_j)$.

If both proxies are live, then:
\begin{itemize}
    \item $d_f > 0 \geq 0$ by definition;
    \item $\displaystyle d_f(F_i, F_j) = \frac{1}{|F_i|} + \frac{1}{|F_j|} = \frac{1}{|F_j|} + \frac{1}{|F_i|} 
    = d_f(F_j, F_i)$;
    \item $\displaystyle d_f(F_i, F_j) 
    % = \frac{1}{|F_i|} + \frac{1}{|F_j|} 
    < \frac{1}{|F_i|} + \frac{1}{|F_j|} + \frac{2}{|F_k|}
    = \left(\frac{1}{|F_i|} + \frac{1}{|F_k|}\right) + \left(\frac{1}{|F_k|} + \frac{1}{|F_j|}\right)
    = d_f(F_i, F_k) + d_f(F_k, F_j)$.
\end{itemize}
\end{proof}

Now we are ready to construct the topology of fork graphs.

\begin{proposition} [Fork Space]
\label{thm:fork_space}
The topological space $\mathcal{T}_{d_f}^F$ over the fork set $F$ is induced by $d_f$.
\end{proposition}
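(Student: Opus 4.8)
The plan is to reduce the claim to the standard fact that every metric on a set generates a topology through its open balls, and then, mirroring the proof of Proposition~\ref{thm:txn_space}, to describe that topology explicitly. The metric hypothesis we need is exactly the content of the preceding lemma (fork distance is a metric on $F$), so most of what remains is bookkeeping about balls.

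First I would fix the candidate basis
\[
\mathcal{B}_f = \bigl\{\, B_{d_f}(F_i, \epsilon) : F_i \in F,\ \epsilon > 0 \,\bigr\},
\qquad
B_{d_f}(F_i, \epsilon) = \{\, F_j \in F : d_f(F_i, F_j) < \epsilon \,\},
\]
and verify the two basis axioms. Covering is immediate since $F_i \in B_{d_f}(F_i, 1)$ for every $F_i$. For the intersection axiom, given $F_k \in B_{d_f}(F_i,\epsilon_1)\cap B_{d_f}(F_j,\epsilon_2)$, I would take $\epsilon = \min\{\epsilon_1 - d_f(F_i,F_k),\, \epsilon_2 - d_f(F_j,F_k)\} > 0$ and use the triangle inequality (from the preceding lemma) to conclude $B_{d_f}(F_k,\epsilon) \subseteq B_{d_f}(F_i,\epsilon_1)\cap B_{d_f}(F_j,\epsilon_2)$. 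Hence $\mathcal{B}_f$ is a basis, and $\mathcal{T}_{d_f}^F$ is the collection of all unions of its members, i.e.\ the topology induced by $d_f$.

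To make $\mathcal{T}_{d_f}^F$ as concrete as the transaction space, I would then observe that, under the convention $d_f(F_i,F_i)=0$, every positive value of $d_f$ is bounded below by $\delta_f = 1/\sup\{|F_i| : F_i \in F\}$: a live--live pair contributes $\tfrac{1}{|F_i|}+\tfrac{1}{|F_j|}\ge 2\delta_f$, while a pair involving a non-live proxy contributes exactly $\delta_f$. Choosing $\epsilon \le \delta_f$ then shrinks $B_{d_f}(F_i,\epsilon)$ to the singleton $\{F_i\}$, so the induced topology is the discrete topology on $F$, with basis $\bigl\{\{F_i\}:F_i\in F\bigr\}$ --- the same shape found in Proposition~\ref{thm:txn_space}.

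The main obstacle is not the (routine) basis verification but the two caveats buried in the definition of $d_f$. First, the discrete-topology refinement needs $\delta_f > 0$, i.e.\ $\sup\{|F_i|\}$ finite; if fork graphs may grow without bound I would keep only the generic ``a metric induces a topology'' conclusion and drop the explicit singleton basis. Second, when a proxy is live one has $d_f(F_i,F_i) = 2/|F_i| \neq 0$, so $d_f$ is strictly a pseudometric; I would therefore read ``induced by $d_f$'' as ``generated by the $d_f$-balls'', and note that passing to the quotient that identifies forks at pseudodistance $0$ (or simply restricting attention to distinct forks, as the worked examples implicitly do) recovers an honest metric together with the clean discrete picture.
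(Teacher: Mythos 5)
Your core argument is the same as the paper's: pick $\epsilon$ strictly below the smallest achievable positive value of $d_f$ (the paper uses $\epsilon = \tfrac{1}{1+\sup\{|F_i|\}} < \delta_f$), observe that every $\epsilon$-ball collapses to a singleton, and conclude that the induced topology is discrete with basis $\{\{F_0\},\dots,\{F_{n-1}\}\}$. The preliminary verification of the two basis axioms for the full family of open balls is a standard step the paper simply omits, not a different route. Where you genuinely add something is in the two caveats. The boundedness caveat is mild (the paper tacitly works with finitely many fork graphs of finite size, so $\delta_f>0$). The second caveat is more substantive: with the definition as written, a live proxy gives $d_f(F_i,F_i)=\tfrac{2}{|F_i|}\neq 0$, and indeed the paper's own metric lemma proves $d_f>0$ unconditionally, so $d_f$ fails the identity axiom and is not literally a metric; your reading of ``induced by $d_f$'' as ``generated by the $d_f$-balls'' (or quotienting by distance zero, or adopting the convention $d_f(F_i,F_i)=0$) is exactly the repair needed for the singleton-basis conclusion to make sense, since otherwise $F_i$ need not lie in its own $\epsilon$-ball. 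The paper glosses over both points; your proof reaches the same destination while making the hypotheses under which it is valid explicit.
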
 

\begin{proof}
% We construct the basis of $\mathcal{T}_{d_f}^F$ as follows.

Let $\displaystyle \epsilon = \frac{1}{1 + \mathtt{sup}\{|F_i|: F_i \in F\}}$. Then let $B_{d_f}(F_i, \epsilon)$ be an open ball around the center of $F_i$ with a radius of $\epsilon$, namely an \textit{$\epsilon$-ball}.
We will show that these $\epsilon$-balls form a \textit{discrete topology}:
every open set induced by an open $\epsilon$-ball is a singleton subset of exactly one fork graph in $F$.
That is, we need to show that $\epsilon < d_f$ for any pair of fork graphs.
Note that by definition, $\epsilon < \delta_f$.
Therefore, it suffices to show that $\delta_f < d_f$.
Indeed, that is how we construct $\delta_f$, because
\[
\delta_f
< 2 \delta_f
= \displaystyle \frac{2}{\mathtt{sup}\{|F_i|: F_i \in F\}}
= \frac{1}{\mathtt{sup}\{|F_i|: F_i \in F\}} + \frac{1}{\mathtt{sup}\{|F_j|: F_j \in F\}}
\leq \frac{1}{|F_i|} + \frac{1}{|F_j|}
= d_f(F_i, F_j).
\]

Therefore, $\epsilon$ is ``small'' enough to split the space into a series of elements whose distance exceeds the boundaries of those $\epsilon$-balls.
That is, we now have a collection $\mathcal{B}$ of singleton open sets, each of which comprises exactly one fork graph:
\[
\mathcal{B} = \{ \{F_0\}, \dots, \{F_{n-1}\}\}.
\] 
$\mathcal{B}$ is a basis because any element in $F$ belongs to a subset of $\mathcal{B}$ and the intersection between any two subsets is empty.
\end{proof}

Now we are ready to study the time-varied topology in terms of blockchain forks, the so-called \textit{growing-fork} graphs.
Essentially, we extended the fork space by the time dimension.

\subsection{Topological Space of Growing-fork Graphs}

We extend the fork graph with a timestamp to model the real-time fork topology.
A fork graph at time $t$ is denoted $F^t$;
the set of infinite fork graphs is denoted $F^\omega$, by the naming convention of point-set topology.
That is, $F^\omega = (F^0, F^1, \dots, )$.
$F^\omega_i$ then represents the ever-growing fork topology of cluster $C_i$.

As before, we first define the metric over the growing-forks.

\begin{definition} [Growing fork distance $d_g$]
\[\displaystyle
d_g (F^\omega_i, F^\omega_j) 
= \frac{1}{\mathtt{inf} \{ |F^m_i|, |F^m_j| \} },
\]
where $m = \mathtt{inf}\{t: |F^t_i| \not= |F^t_j| \}$, $t \in \mathbb{Z}$, $t \ge 0$.
Essentially, $d_g$ tracks the smallest possible number of forks between two growing-fork graphs once any of the two starts forking.
Note that all blockchains initially have a single fork initiated by the genesis block.
\end{definition}

To make matters more concrete, we illustrate the metric in Figure~\ref{fig:topo_g},
where there are three blockchains ($F_0, F_1, F_2$) in the first three steps ($t = 0, 1, 2$).
Here are some example calculations:
\begin{itemize} \setlength\itemsep{1em}
    \item $\displaystyle
        d_g(F_0^\omega, F_1^\omega) 
        = \frac{1}{\mathtt{inf}\{|F^2_0|, |F^2_1|\}}
        = \frac{1}{\mathtt{inf}\{3, 2\}}
        = \frac{1}{2}$;
        
    \item $\displaystyle
        d_g(F_1^\omega, F_2^\omega) 
        = \frac{1}{\mathtt{inf}\{|F^2_1|, |F^2_2|\}}
        = \frac{1}{\mathtt{inf}\{2, 1\}}
        = \frac{1}{1}$
        = 1;        
        
    \item $\displaystyle
        d_g(F_0^\omega, F_2^\omega) 
        = \frac{1}{\mathtt{inf}\{|F^2_0|, |F^2_2|\}}
        = \frac{1}{\mathtt{inf}\{3, 1\}}
        = \frac{1}{1}$
        = 1;   
\end{itemize}

\begin{figure}[!t]
    \centering
    \includegraphics[width=150mm]{./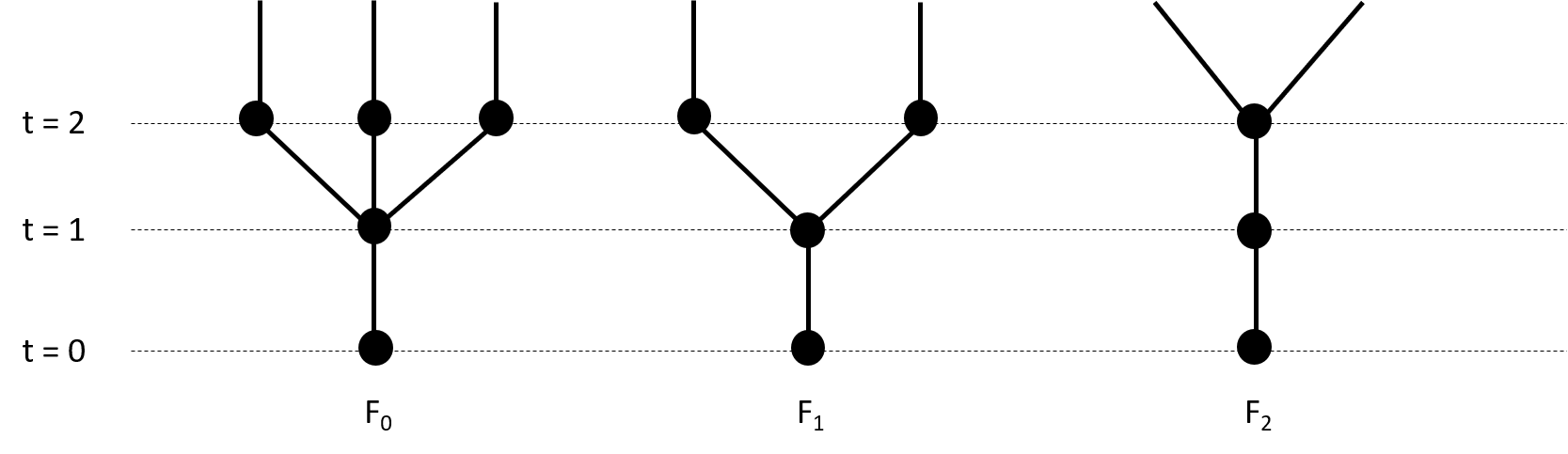}
    \caption{Example of three blockchains' growing-fork topology.
    }
    \label{fig:topo_g}
\end{figure}

As before, we will prove that $d_g$ is indeed a metric:
\[
\forall x, y, z \in F^\omega, \mathtt{ then }\hspace{1mm} d_g(x) \ge 0, d_g(x,y) = d_g(y,x), d_g(x,y) \leq d_g(x,z) + d_g(z,y).
\]

\begin{lemma}
$d_g$ is a metric on $F^\omega$.
\end{lemma}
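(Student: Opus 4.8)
The plan is to verify the three metric axioms in the case-driven style of the two preceding lemmas, after recording one structural fact implicit in the Blockchain-Fork definition: forks are only ever added (an ``eliminated'' fork stays a vertex of the fork graph), so for each $i$ the map $t \mapsto |F_i^t|$ is non-decreasing, integer-valued, and equal to $1$ at $t=0$. Throughout I abbreviate $m_{ij} = \inf\{t : |F_i^t| \ne |F_j^t|\}$; since $|F_i^0| = |F_j^0| = 1$, whenever this set is non-empty the infimum is attained, is at least $1$, and $|F_i^t| = |F_j^t|$ holds for every $t < m_{ij}$. When the set is empty the two size-sequences never differ and I put $d_g := 0$, mirroring the way the earlier lemmas silently dispose of their degenerate cases.

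Non-negativity and symmetry fall out immediately: $|F_i^{m_{ij}}| \ge 1$ forces $0 < d_g \le 1$, while both the set that defines $m_{ij}$ and the quantity $\inf\{|F_i^{m_{ij}}|, |F_j^{m_{ij}}|\}$ are invariant under swapping $i$ and $j$.

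The substance is the triangle inequality, for which I would prove the stronger ultrametric bound $d_g(x,y) \le \max\{d_g(x,z), d_g(z,y)\}$; this suffices since $d_g \ge 0$. The key observation is that $m_{ij} \ge \min\{m_{ik}, m_{kj}\}$, because for every $t$ below $\min\{m_{ik}, m_{kj}\}$ the three sizes $|F_i^t|, |F_k^t|, |F_j^t|$ all coincide, so $i$ and $j$ cannot first disagree any earlier. Using the symmetry already in hand I may assume $m_{ik} \le m_{kj}$ and split on whether $m_{ij}$ equals $m_{ik}$ or strictly exceeds it. If $m_{ij} > m_{ik}$, one checks that necessarily $m_{ik} = m_{kj} =: m$ (otherwise $i$ and $k$ would still agree at time $m_{ik}$), so both right-hand legs equal $1/\min\{|F_i^m|, |F_k^m|\}$, whereas monotonicity gives $|F_i^{m_{ij}}|, |F_j^{m_{ij}}| \ge |F_i^m| = |F_j^m|$ and hence $d_g(x,y) \le 1/|F_i^m|$, which is no larger. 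If $m_{ij} = m_{ik} =: m$, set $a = |F_i^m|$, $b = |F_j^m|$, $c = |F_k^m|$ with $a \ne b$ and $a \ne c$; when $b = c$ we get $d_g(x,y) = d_g(x,z)$ outright, and when $b \ne c$ the pair $k, j$ already disagrees at $m$, so $m_{kj} = m$ and the elementary inequality $\min\{a,b\} \ge \min\{a,b,c\}$ finishes it.

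I expect the triangle inequality to be the only real obstacle. The delicate points are establishing $m_{ij} \ge \min\{m_{ik}, m_{kj}\}$ and then correctly determining which of the three first-disagreement times coincide in each branch, together with the need to invoke monotonicity of the fork counts explicitly — without it the step $d_g(x,y) \le 1/|F_i^m|$ in the branch $m_{ij} > m_{ik}$ would break down. Everything else is a direct unwinding of the definition.
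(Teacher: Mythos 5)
Your proof is correct, but it takes a genuinely different route from the paper's. The paper parametrizes its case analysis by the \emph{per-chain} first-forking times $m_i, m_j, m_k$ (the first $t$ with $|F_i^t|>1$, etc.) and verifies the additive triangle inequality by explicit numerical computation in each of about six cases, implicitly identifying the pairwise first-disagreement time $m$ with $m_i = m_j$. You instead work directly with the pairwise first-disagreement times $m_{ij}$, prove the \emph{stronger ultrametric} bound $d_g(x,y) \le \max\{d_g(x,z), d_g(z,y)\}$, and hang everything on the single lemma $m_{ij} \ge \min\{m_{ik}, m_{kj}\}$ plus monotonicity of $t \mapsto |F_i^t|$. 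Your version buys three things: it is sharper (an ultrametric statement); it correctly handles the configuration where two chains fork simultaneously to the \emph{same} count, where the paper's identification of $m$ with $m_i = m_j$ breaks down and its claim $d_g \le \tfrac12$ needs the monotonicity you make explicit; and it surfaces the non-decreasing-fork-count assumption that the paper silently relies on. The paper's version, in exchange, is more concrete and mirrors the computational template of the two preceding lemmas. Two small caveats on your side: the parenthetical justification for $m_{ik} = m_{kj}$ in the branch $m_{ij} > m_{ik}$ is garbled (the correct reason is that at $t = m_{ik}$ the chains $i,j$ still agree while $i,k$ disagree, forcing $m_{kj} \le m_{ik}$, which with your WLOG gives equality), and setting $d_g := 0$ for never-disagreeing sequences yields only a pseudometric --- though the paper never checks the separation axiom either, so you are matching its standard of rigor there.
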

\begin{proof}
Evidently, 
\[
d_g(F_i^\omega, F_j^\omega) = d_g(F_j^\omega, F_i^\omega) = \frac{1}{\mathtt{inf} \{ |F^m_i|, |F^m_j| \} } \ge 0,
\]
by definition.
It remains to show $d_g(F_i^\omega, F_j^\omega) \leq d_g(F_i^\omega, F_k^\omega) + d_g(F_k^\omega, F_j^\omega)$.
Define $m_i$ to be the smallest index such that $|F_i^{m_i}| > 1$.

If $m_i = m_j$, it means $m = m_i = m_j$.
\begin{itemize}

\item If $m_k < m$, then we know that $F_k^\omega$ starts to fork earlier than $F_i^\omega$ and $F_j^\omega$.
That makes 
\[\displaystyle
d_g(F_i^\omega, F_k^\omega) = d_g(F_k^\omega, F_j^\omega) = \frac{1}{|F_i^{m_i}|} = 1.
\]
Note that since both $F_i^\omega$ and $F_j^\omega$ start forking at the same time, we have
\[\displaystyle
d_g(F_i^\omega, F_j^\omega) \leq \frac{1}{2} < 2 = d_g(F_i^\omega, F_k^\omega) + d_g(F_k^\omega, F_j^\omega). 
\]

\item If $m_k > m$, this means $F_k^\omega$ starts to fork after $F_i^\omega$ and $F_j^\omega$. 
Then the two distances $d_g(F_i^\omega, F_k^\omega)$ and $d_g(F_k^\omega, F_j^\omega)$ depend on $F_k^{m_k}$.
We can similarly calculate that 
\[\displaystyle
d_g(F_i^\omega, F_k^\omega) = d_g(F_k^\omega, F_j^\omega) = \frac{1}{|F_i^{m_i}|} = 1,
\]
and draw the same conclusion as above.

\item If $m_k = m$, then all three growing-forks start forking at the same time. 
Then we have
\[\displaystyle
d_g(F_i^m, F_k^m) = \frac{1}{\mathtt{inf} \{ |F^m_i|, |F^m_k| \} } \geq \frac{1}{|F^m_i|}.
\]
Similarly, we also have
\[\displaystyle
d_g(F_k^m, F_j^m) = \frac{1}{\mathtt{inf} \{ |F^m_k|, |F^m_j| \} } \geq \frac{1}{|F^m_j|}.
\]
Therefore, we have 
\[\displaystyle
d_g(F_i^m, F_k^m) + d_g(F_k^m, F_j^m) \geq \frac{1}{|F^m_i|} + \frac{1}{|F^m_j|} >
\frac{1}{\mathtt{inf} \{ |F^m_i|, |F^m_k| \} } = 
d_g(F_i^m, F_j^m).
\]
The triangular inequality is thus satisfied.
\end{itemize}

If $m_i \not= m_j$, without loss of generality we assume $m_i < m_j$ in the following.
\begin{itemize}
\item If $m_k < m_i$, $m_i < m_k < m_j$, or $m_k > m_j$, obviously we have 
\[\displaystyle
d_g(F_i^\omega, F_k^\omega) = d_g(F_k^\omega, F_j^\omega) = 1.
\]    
Then, indeed:
\[\displaystyle
d_g(F_i^\omega, F_j^\omega) \leq 1 < 2 = 
d_g(F_i^\omega, F_k^\omega) + d_g(F_k^\omega, F_j^\omega). 
\]

\item If $m_k = m_i < m_j$, then we know  
\[\displaystyle
d_g(F_k^\omega, F_j^\omega) = 1.
\]
So, we have
\[\displaystyle
d_g(F_i^\omega, F_j^\omega) \leq 1 
< d_g(F_i^\omega, F_k^\omega) + 1
= d_g(F_i^\omega, F_k^\omega) + d_g(F_k^\omega, F_j^\omega). 
\]

\item If $m_k = m_j > m_i$, then we know  
\[\displaystyle
d_g(F_i^\omega, F_k^\omega) = 1.
\]
So, we have
\[\displaystyle
d_g(F_i^\omega, F_j^\omega) \leq 1 
< 1 + d_g(F_k^\omega, F_j^\omega)
= d_g(F_i^\omega, F_k^\omega) + d_g(F_k^\omega, F_j^\omega). 
\]
Therefore, again, the triangular inequality is satisfied.
\end{itemize}

\end{proof}

Lastly, we show that $d_g$ defined as such induces the topology over $F^\omega$.

\begin{proposition}
Topological space $\mathcal{T}_{d_g}^{F^\omega}$ on $F^\omega$ is induced by $d_g$.
\end{proposition}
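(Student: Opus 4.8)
The plan is to follow the template already used in the proofs of Propositions~\ref{thm:txn_space} and~\ref{thm:fork_space}: exhibit an explicit basis of singletons and argue that it generates exactly the topology induced by $d_g$. The one structural fact that makes this go through is that $F^\omega$ is a \emph{finite} set --- it has precisely $n = |C|$ elements, namely the growing-fork graphs $F_0^\omega, \dots, F_{n-1}^\omega$ of the $n$ clusters --- so although the individual fork counts $|F_i^t|$ may in principle grow without bound as $t$ increases, there are only $\binom{n}{2}$ pairwise distances to control.

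First I would invoke the preceding lemma: $d_g$ is a metric on $F^\omega$, so $d_g(F_i^\omega, F_j^\omega) > 0$ whenever $F_i^\omega \not= F_j^\omega$. Concretely, each such distance equals $1/\mathtt{inf}\{|F_i^m|,|F_j^m|\}$ for the finite index $m$ at which the two fork counts first diverge, hence is the reciprocal of a positive integer and in particular strictly positive. Since we are taking a minimum over a finite, nonempty family, the quantity
\[
\delta \triangleq \mathtt{inf}\bigl\{\, d_g(F_i^\omega, F_j^\omega) : 0 \le i < j < n \,\bigr\}
\]
is attained and satisfies $\delta > 0$.

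Next, set $\epsilon = \delta/2$ and consider the open ball $B_{d_g}(F_i^\omega, \epsilon)$ around each center $F_i^\omega$. By the choice of $\epsilon$, no other growing-fork graph lies within distance $\epsilon$ of $F_i^\omega$, so $B_{d_g}(F_i^\omega, \epsilon) = \{F_i^\omega\}$. Thus the collection
\[
\mathcal{B} = \bigl\{\, \{F_0^\omega\}, \dots, \{F_{n-1}^\omega\} \,\bigr\}
\]
consists entirely of open balls; every element of $F^\omega$ lies in exactly one member of $\mathcal{B}$, and the members are pairwise disjoint, so $\mathcal{B}$ is a basis. The topology it generates is the discrete topology $\mathcal{P}(F^\omega)$, and since the open balls of a metric space always form a basis for the topology it induces, $\mathcal{P}(F^\omega)$ coincides with $\mathcal{T}_{d_g}^{F^\omega}$. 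Hence $\mathcal{T}_{d_g}^{F^\omega}$ is induced by $d_g$.

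The only genuine obstacle is the degenerate case in which two distinct growing-fork graphs happen to agree in fork count at \emph{every} time step: then $m = \mathtt{inf}\,\emptyset$ is undefined and the positivity step stalls. I would handle this exactly as the metric lemma implicitly does --- by appealing to the standing modeling assumption that each cluster eventually forks and that no two clusters share an identical fork history, so that $m$ is always finite --- or, failing that, I would note that $d_g$ is then at worst a pseudometric, which still induces a topology with the open balls as a basis, so the statement of the proposition is unaffected.
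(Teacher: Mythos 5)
Your proof is correct and follows essentially the same route as the paper's: both exhibit the singleton basis $\{\{F_0^\omega\},\dots,\{F_{n-1}^\omega\}\}$ by choosing $\epsilon$ strictly below every pairwise $d_g$-distance, the only difference being that you take $\epsilon$ as half the minimum over the finitely many pairs while the paper writes down the explicit bound $\epsilon = 1/(1+\sup\{|F_i^{m_i}|\})$. Your closing remark about the degenerate case of two clusters with identical fork histories addresses a gap the paper silently assumes away, and the pseudometric fallback handles it cleanly.
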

\begin{proof}
Let $\displaystyle \epsilon = \frac{1}{1 + \mathtt{sup} \{ |F_i^{m_i}| : F_i^\omega \in F^\omega \}}$, $i \in \mathbb{Z}$, $0 \leq i < n$,
where $m_i$ is the smallest time index such that $|F_i^{m_i}| > 1$ in the infinite sequence $F_i^\omega$.
Then an open ball $B_{d_g}(u, \epsilon)$ is fine enough to isolate each element in $F^\epsilon$ because:
\[\displaystyle
\epsilon < \frac{1}{\mathtt{sup} \{ |F_i^{m_i}| : F_i^\omega \in F^\omega \}} 
\leq \frac{1}{\mathtt{inf} \{ |F_i^{m_i}| : F_i^\omega \in F^\omega \}}
\leq \frac{1}{\mathtt{inf} \{ |F_i^{m_i}|, |F_j^{m_j}| \}}
= d_g(F_i^\omega, F_j^\omega)
\]
for any $0 \leq i, j < n$.
Therefore, we found a basis $\mathcal{B}_g$ of space $\mathcal{T}_{d_g}^{F^\omega}$:
\[
\mathcal{B}_g = 
    \left\{   
        \left\{ F_0^\omega \right\},
        \dots,
        \left\{ F_{n-1}^\omega \right\}
    \right\}.
\]
\end{proof}

Now we have constructed three topological spaces for time-varied forks, for static forks, and for transactions among blockchains.
In the next section, we will extract the relationship across these three spaces.

\section{Commutative Morphism from Growing-fork Space to Transaction Space}
\label{sec:map}

We will show that a commutative morphism exists among the three spaces as the following.
That is, we will construct $g: \mathcal{T}^{F^\omega}_{d_g} \rightarrow \mathcal{T}^F_{d_f} $ and $h: \mathcal{T}^F_{d_f} \rightarrow \mathcal{T}^T_{d_t} $, respectively, 
and show that both maps are continuous.
We can then reason about the transaction status by studying the topology of growing-fork topology using the composite function $h \circ g$.

\begin{equation*}
\begin{tikzcd}
\mathcal{T}^{F^\omega}_{d_g}    \arrow [r, "g"]
            \arrow [dr, swap, "h \circ g"]
&
\mathcal{T}^F_{d_f}   
    % \arrow [dr, "g \circ h"]
    \arrow [d, "h"]
\\
{}&
\mathcal{T}^T_{d_t}   
    % \arrow [r, swap, "h"]
% & 
% D
\end{tikzcd}
\end{equation*}

\subsection{Homeomorphism between Fork Space and Transaction Space}

Let $\mathcal{B}_f$ denote the basis for the fork space defined in Proposition~\ref{thm:fork_space},
and $\mathcal{B}_t$ denote the basis for the transaction space in Proposition~\ref{thm:txn_space}.
We use $\mathcal{F}$ to indicate the topological space induced from $\mathcal{B}_f$, i.e., $\mathcal{F} = \mathcal{T}_{d_f}^F$.
Similarly, we overuse $\mathcal{T}$ to indicate the topological space induced from $\mathcal{B}_t$, i.e., $\mathcal{T} = \mathcal{T}_{d_t}^C$, if it is clear from the context.
We will construct a bijective function $h: \mathcal{F} \rightarrow \mathcal{T}$,
and show that both $h$ and $h^{-1}$ are continuous.

Let $u \in \mathcal{F}$ be an open set.
This, by definition, means that $u$ is an arbitrary union of definite intersections among elements in $\mathcal{B}_f$.
Because $\mathcal{B}_f$ is discrete, $u$ is simply a set of arbitrary selections of $F$'s, denoted $\{F_{u_0}, \dots, F_{u_{m-1}}\}$, where $u_i < u_j$ if $i < j$, and $0 < u_m \leq n$.
Define $I$ to be the index set: $I = \{u_0, \dots, u_{m-1}\}$, and denote $u = F_I$.
Now, we define $v \in \mathcal{C}$ to be the set with the same set of indexed clusters: $v = C_I$.
Then, we construct the map as $h: u \mapsto v$.

\begin{proposition}
$h$ is a homeomorphism between $\mathcal{F}$ and $\mathcal{T}$.
\end{proposition}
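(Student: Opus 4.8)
The plan is to verify the four conditions that together establish $h$ is a homeomorphism: $h$ is well-defined, bijective, continuous, and has a continuous inverse. Since both $\mathcal{F}$ and $\mathcal{T}$ carry discrete topologies with bases $\mathcal{B}_f = \{\{F_0\},\dots,\{F_{n-1}\}\}$ and $\mathcal{B}_t = \{\{C_0\},\dots,\{C_{n-1}\}\}$ respectively, every subset of each space is open (and closed), so continuity will be essentially automatic once bijectivity is in hand. The real content is therefore the set-up: confirming that the index-preserving assignment $u = F_I \mapsto v = C_I$ is a genuine bijection between the open sets of $\mathcal{F}$ and the open sets of $\mathcal{T}$.

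First I would establish that $h$ is well-defined. An arbitrary open set $u \in \mathcal{F}$ is, because $\mathcal{B}_f$ is discrete, uniquely determined by the index set $I \subseteq \{0,\dots,n-1\}$ of fork graphs it contains; write $u = F_I$. The map sends this to $v = C_I$, the union of the singleton basis elements of $\mathcal{T}$ with the same indices. Since $I$ is uniquely recoverable from $u$, the image is unambiguous, so $h$ is well-defined. Next, bijectivity: the assignment $I \mapsto I$ on the power set $\mathcal{P}(\{0,\dots,n-1\})$ is the identity, hence a bijection, and since open sets of $\mathcal{F}$ (resp. $\mathcal{T}$) are in bijective correspondence with subsets $I$ via $u = F_I$ (resp. $v = C_I$), the composite correspondence $F_I \leftrightarrow C_I$ is bijective. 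In particular the inverse map $h^{-1}: v = C_I \mapsto u = F_I$ is defined on all open sets of $\mathcal{T}$.

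Then I would address continuity of $h$ and of $h^{-1}$. By the definition recalled in the background section, $h$ is continuous if the preimage of every open set in $\mathcal{T}$ is open in $\mathcal{F}$. Take any open $v \in \mathcal{T}$; write $v = C_I$. Its preimage under $h$ is $F_I$, which is a union of elements of $\mathcal{B}_f$, hence open in $\mathcal{F}$ (indeed every subset is open in the discrete topology). The same argument with the roles of $\mathcal{F}$ and $\mathcal{T}$ swapped shows $h^{-1}$ is continuous: the preimage under $h^{-1}$ of an open $u = F_I \in \mathcal{F}$ is $C_I$, again open. Therefore both $h$ and $h^{-1}$ are continuous, and $h$ is a homeomorphism.

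I do not expect a serious obstacle here; the main subtlety — if any — is purely notational: being careful that the identification of an open set with its index set $I$ is legitimate, which follows from the discreteness established in Proposition~\ref{thm:fork_space} and Proposition~\ref{thm:txn_space}. One should also note explicitly that $h$ respects unions and intersections of open sets (it commutes with the Boolean operations on index sets), which makes it a morphism of the lattice of open sets and reinforces that it is the ``same'' topology viewed through relabeled points. Once the discrete structure is invoked, continuity in both directions is immediate, so the proof is short.
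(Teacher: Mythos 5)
Your proposal is correct and follows essentially the same route as the paper: identify open sets with index sets $I$, observe bijectivity from the power-set correspondence, and get continuity in both directions for free from the discreteness of $\mathcal{F}$ and $\mathcal{T}$. The only difference is one of completeness in your favor---you spell out the continuity of $h^{-1}$ and well-definedness explicitly, whereas the paper declares the $h^{-1}$ case ``similar'' and skips it.
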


\begin{proof}

We need to show that $h$ and $h^{-1}$ are (i) bijective and (ii) continuous.

\textbf{$h$ is bijective.}
If two elements $u, u' \in \mathcal{F}$ are distinct and map to the same $v \in \mathcal{T}$,
then the subsets $I_u$ and $I_{u'}$ in the index set $I$ consist of the same indices,
which means $u = u'$.
Therefore, $h$ is \textit{injective}.
Conversely, for any $v \in \mathcal{F}$, there must exist a subset $I_v$ by definition.
As a consequence, there must be an element in $\mathcal{F}$ that is induced by the subset $I_v$, again by definition.
Therefore, $h$ is \textit{surjective}.
As a result, $h$ is bijective.

\textbf{$h$ is continuous.}
$\forall v \in \mathcal{T}$, an open set in the transaction space, 
we will show that $h^{-1}(v)$ is an open set in the fork space $\mathcal{F}$.
Because $h$ is bijective, we know there must exist a unique value for $h^{-1}(v)$.
Let $u = h^{-1}(v)$, $u \in \mathcal{F}$.
Note that $\mathcal{F}$ is a discrete topology;
therefore $u$ is an open set in $\mathcal{F}$.

\textbf{$h^{-1}$ is continuous.}
The proof is similar to $h$, and we skip it here.

\end{proof}

Essentially, there is an equivalence between the transaction status and the transaction's proxies on involved clusters in terms of forks,
from a topological point of view.
Therefore, we just proved a stronger result than we need:
\[
\begin{tikzcd}%[row sep=3em,column sep=3em]
% \mathcal{T}^{F^\omega}_{d_g}    \arrow [r, "g"]
%             \arrow [dr, swap, "h \circ g"]
% &
\mathcal{T}^F_{d_f}   
    % \arrow [dr, "g \circ h"]
    \arrow [d, "h"]
\\
% {}&
\mathcal{T}^T_{d_t}   
    \arrow [u, shift left, "h^{-1}"]
    % \arrow [r, swap, "h"]
% & 
% D
\end{tikzcd}
\]
Next, we will show an equivalence between such a ``static'' fork space and the entire ever-growing forks, topologically.

\subsection{From Growing-Fork Space to Fork Space}

This section will construct a continuous function $g: \mathcal{F}^\omega \rightarrow \mathcal{F}$ to ``flatten'' the infinite sequences of growing-fork spaces into a static fork space.
Intuitively, we will show that the time factor in the growing-fork space can be topologically preserved in the static fork space.

\begin{definition}
We define a map $g$ such that for a set of infinite sequences of forks $u \in F^\omega$,
$g(u)$ is the set of fork graphs of those transactions that incur the \textit{first} fork, 
denoted $v = g(u)$.
If the blockchain never spawns a fork, we define the corresponding element in $v$ as $\emptyset$.
\end{definition}

\begin{proposition}
$g$ is continuous.
\end{proposition}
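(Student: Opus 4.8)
The plan is to exploit the fact, established in Propositions~\ref{thm:fork_space} and the growing-fork proposition, that both $\mathcal{F}^\omega = \mathcal{T}_{d_g}^{F^\omega}$ and $\mathcal{F} = \mathcal{T}_{d_f}^F$ carry the discrete topology. In a discrete space \emph{every} subset is open, so continuity of $g$ reduces to the trivial observation that the preimage of any open set is open. Concretely, first I would recall that the basis $\mathcal{B}_g$ consists of the singletons $\{F_i^\omega\}$ and the basis $\mathcal{B}_f$ of the singletons $\{F_i\}$, so any open $v \subseteq F$ is just an arbitrary union of such singletons. Then I would argue that $g^{-1}(v)$ is a subset of $F^\omega$, and since $\mathcal{F}^\omega$ is discrete, $g^{-1}(v)$ is automatically open. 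That is essentially the whole argument; the composition $h \circ g$ is then continuous because the composition of continuous maps is continuous (as noted in the topology background), completing the commutative diagram.

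The one point that genuinely requires care — and which I expect to be the main obstacle — is checking that $g$ is a \emph{well-defined} function from $F^\omega$ to $F$ before invoking discreteness. An element $u \in F^\omega$ is a tuple of growing-fork graphs $(F_i^\omega)$; the definition sends each component to the fork graph ``at the first fork,'' i.e. to $F_i^{m_i}$ where $m_i = \inf\{t : |F_i^t| > 1\}$, and to $\emptyset$ if no such $m_i$ exists. I would need to note that this $m_i$ is well-defined whenever the blockchain ever forks (it is an infimum of a nonempty subset of $\mathbb{Z}_{\ge 0}$, hence attained), so the target component $F_i^{m_i} \in F$ is unambiguous, and the $\emptyset$ convention covers the remaining case. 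Strictly speaking one should also confirm that the image $g(u)$ lands in the index-set form $F_I$ used to describe open sets of $\mathcal{F}$ in the previous subsection — i.e. that $g(u)$ is a (possibly empty) selection of fork graphs indexed by a subset $I \subseteq \{0,\dots,n-1\}$ — which it is, since each cluster contributes at most one fork graph.

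So the skeleton is: (1) verify $g$ is well-defined using the attainment of the infimum $m_i$; (2) observe $\mathcal{F}^\omega$ is discrete, hence every subset, in particular every $g^{-1}(v)$ for $v$ open in $\mathcal{F}$, is open; (3) conclude $g$ is continuous, and hence $h \circ g$ is continuous by composition. If one wanted to say something less vacuous, one could instead verify continuity at the level of the bases — showing $g^{-1}(\{F_i\})$ equals the union of those $\{F_j^\omega\}$ whose first fork graph is $F_i$ — but since the codomain topology is discrete this adds no real content. I would keep the proof short and lean on discreteness, flagging the well-definedness check as the only substantive step.
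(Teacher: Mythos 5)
Your proposal is correct, and its core reduction is the same one the paper relies on: both $\mathcal{T}_{d_g}^{F^\omega}$ and $\mathcal{T}_{d_f}^{F}$ were shown to have bases of singletons, hence are discrete, and continuity of $g$ ultimately rests on the discreteness of the domain. The difference is one of packaging. The paper does not invoke the blanket fact that every map out of a discrete space is continuous; instead it fixes an open $v \in \mathcal{F}$, identifies a set $T_v$ of transactions triggering the first fork on each element of $v$, and writes $g^{-1}(v)$ explicitly as the union $\bigcup_{t_b \in T_v} b$ of basis singletons $b \in \mathcal{B}_g$, concluding openness because a union of open sets is open. Your shortcut makes the logical content transparent (the claim is vacuous once discreteness is in hand), whereas the paper's version at least records which basis elements constitute the preimage --- exactly the fiber description you relegate to a parenthetical at the end. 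What you add, and the paper omits entirely, is the well-definedness check for $g$: attainment of the infimum $m_i$, the handling of the $\emptyset$ convention for never-forking chains, and the verification that the image lands in the index-set form $F_I$ used to describe open sets of $\mathcal{F}$. That is a worthwhile addition, since the paper's definition of $g$ is informal (it is stated on subsets $u$ rather than on points, and the $\emptyset$ clause sits awkwardly with the claim that $g$ maps into $F$), and it is the only place where something could genuinely go wrong. Either route establishes the proposition.
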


\begin{proof}
Let $v \in \mathcal{F}$ be an open set, i.e., an arbitrary union of intersections of fork graphs.
Since $\mathcal{F}$ is a discrete topology,
$v$ is a subset of $F$, $v \in F$.
By definition of $g$, we know there exists a set of transactions $T_v$ who trigger the first forks on every fork element in $v$.

Recall that the topology $\mathcal{F}^\omega$ is induced by the open ball $B_{d_g}(u, \epsilon)$,
where $u \in \mathcal{F}$ and $\epsilon$ denotes the distance ``just'' smaller than the distance imposed by a transaction causing the first fork.
That is, every singleton element in the basis, i.e., $b \in \mathcal{B}_g$ is associated with a transaction $t_b$.
Therefore, we have
\[
t_b \in T_v \rightarrow b \in u.
\]
By definition, we know $b$ is an open set in $\mathcal{F}^\omega$.
From the above condition, $u$ is a union of $b$'s:
\[
u = \bigcup_{t_b \in T_v} b.
\]
Since a union of open sets is an open set,
$u$ is an open set in $\mathcal{F}^\omega$.

\end{proof}

Therefore, the following morphism holds:

\begin{equation*}
\begin{tikzcd}
\mathcal{T}^{F^\omega}_{d_g}    \arrow [r, "g"]
            % \arrow [dr, swap, "h \circ g"]
&
\mathcal{T}^F_{d_f}.   
    % \arrow [dr, "g \circ h"]
    % \arrow [d, "h"]
% \\
% {}&
% \mathcal{T}^T_{d_t}   
    % \arrow [r, swap, "h"]
% & 
% D
\end{tikzcd}
\end{equation*}

Since both $g$ and $h$ are continuous, then the composite $h \circ g$ is also continuous and the following is true:
\begin{equation*}
\begin{tikzcd}
\mathcal{T}^{F^\omega}_{d_g}    \arrow [dr, "h \circ g"]
            % \arrow [dr, swap, "h \circ g"]
&
% \mathcal{T}^T_{d_t}.   
    % \arrow [dr, "g \circ h"]
    % \arrow [d, "h"]
\\
{}&
\mathcal{T}^T_{d_t}. 
% \mathcal{T}^T_{d_t}   
    % \arrow [r, swap, "h"]
% & 
% D
\end{tikzcd}
\end{equation*}

\section{Final Remark}
\label{sec:final}

This paper constructs a topological space upon the transactions among blockchains
and shows that the transaction space is homeomorphic to another topological space of static fork graphs induced by those transactions.
Further, this paper constructs a time-varying counterpart of the static fork graphs, namely the growing-fork topological space,
and shows that a continuous function exists from the growing-fork topology to the static forks.
Combined together, these results allow us to reason about the cross-blockchain transactions through the growing-fork topology, an intuitive representation of blockchains.
One of the most important applications based on these results, for example,
is the \textit{decidiability} of cross-blockchain transactions:
whether an arbitrary $n$-party cross-blockchain transaction is able to complete in finite time can now be reduced to a pure topological problem where we can call upon a vast literature of results in combinatorial and algebraic topology. 

\section*{Acknowledgment}

This work is supported by the U.S.
Department of Energy (DOE) under contract number DE-SC0020455.
This work is also supported by a research award from Amazon and a research award from Google.

\bibliographystyle{IEEEtran}
\bibliography{ref_new}

\end{document}